\newtheorem{proposition}{Proposition}
\journal{Automatica}
\date{October 7, 2024}
\begin{document}
%% Frontmatter
\begin{frontmatter}
%% Title, authors and addresses
%% use the tnoteref command within \title for footnotes;
%% use the tnotetext command for theassociated footnote;
%% use the fnref command within \author or \address for footnotes;
%% use the fntext command for theassociated footnote;
%% use the corref command within \author for corresponding author footnotes;
%% use the cortext command for theassociated footnote;
%% use the ead command for the email address,
%% and the form \ead[url] for the home page:
%% \title{Title\tnoteref{label1}}
%% \tnotetext[label1]{}
%% \author{Name\corref{cor1}\fnref{label2}}
%% \ead{email address}
%% \ead[url]{home page}
%% \fntext[label2]{}
%% \cortext[cor1]{}
%% \affiliation{organization={},
%%             addressline={},
%%             city={},
%%             postcode={},
%%             state={},
%%             country={}}
%% \fntext[label3]{}

\title{Non-collocated vibration absorption using delayed resonator for spectral and spacial tuning -- analysis and experimental validation}

% use optional labels to link authors explicitly to addresses:
% \author[label1,label2]{}
% \affiliation[label1]{organization={},
%             addressline={},
%             city={},
%             postcode={},
%             state={},
%             country={}}
\author[CTU]{Matěj Kuře}
\author[CTU]{Adam Peichl}
\author[CTU]{Jaroslav Bušek}
\author[UCO]{Nejat Olgac}
\author[CTU,CIIRC]{Tomáš Vyhlídal}
\affiliation[CTU]{
    organization={Department of Instrumentation and Control Engineering, Faculty of Mechanical Engineering, Czech Technical University in Prague},
    %addressline={Technická 1902/4}, 
    city={Prague~6},
    postcode={160~00},
    country={Czech Republic},
}
\affiliation[CIIRC]{
    organization={Czech Institute of Informatics, Robotics and Cybernetics, Czech Technical University in Prague},
    %addressline={Jugoslávských partyzánů 1580/3}, 
    city={Prague~6},
    postcode={160~00}, 
    country={Czech Republic}
}
\affiliation[UCO]{
    organization={Department of Mechanical Engineering, University of Connecticut},
    %addressline={\textcolor{red}{Address Two}},
    city={Storrs},
    %postcode={22222}, 
    state={CT},
    country={USA}
}

%% Abstract
\begin{abstract}
Non-collocated vibration absorption (NCVA) concept using delayed resonator for in-situ tuning is analyzed and experimentally validated. There are two critical contributions of this work. One is on the scalable analytical pathway for verifying the concept of \emph{resonant substructure} as the basis of the ideal vibration absorption. The second is to experimentally validate the \emph{spatial and spectral} tunability of NCVA structures for the first time. For both novelties arbitrarily large dimensions of interconnected mass-spring-damper chains are considered. Following the state of the art on NCVA, control synthesis is performed over the \emph{resonant substructure} comprising the delayed resonator and a part of the primary structure involved in the vibration absorption. The experimental validation of the proposed NCVA concept is performed on a mechatronic setup with three interconnected cart-bodies. Based on the spectral analysis, an excitation frequency is selected for which a stable vibration suppression can be achieved sequentially for all the three bodies, one collocated and two non-collocated. The experimental results closely match the simulations for complete vibration suppression at the targeted bodies, and thus validating the crucial \emph{spatial} tunability characteristic as well as the traditional \emph{spectral} tuning.
\end{abstract}

%% Keywords
\begin{keyword}
Vibration absorption \sep Delayed resonator \sep Stability \sep Experimental validation.
%% keywords here, in the form: keyword \sep keyword
\end{keyword}

\tnotetext[footnoteinfo]{%
The presented research was supported by the Czech Science Foundation under the project 21-00871S and by the European Union under the project ”Robotics and advanced industrial production”, reg. no. CZ.02.01.01/00/22\_008/0004590. The first and third authors also acknowledge support by the Grant Agency of the Czech Technical University in Prague, student grant No.~SGS23/157/OHK2/3T/12.\\ \\
\textbf{Copyright:}\textcolor{white}{1}\copyright\textcolor{white}{1}2024 by the authors. This manuscript version is made available under the CC-BY 4.0 license \href{https://creativecommons.org/licenses/by/4.0/}{https://creativecommons.org/licenses/by/4.0/}}
\end{frontmatter}

% % To be deleted -- layout information
% \noindent
% \colorbox{yellow}{%
% LW: \the\linewidth, 
% TW: \the\textwidth, 
% TH: \the\textheight
% }

%% Introduction ================================================================
\section{Introduction}
\label{sec:Introduction}

Vibration absorbers have proven to be effective in a variety of engineering applications \cite{preumont_2018_vibration}. In the traditional \emph{collocated vibration absorption} task, the absorber is deployed at the place of the mechanical structure, where vibration is to be suppressed. The spectral tuning of the collocated absorber is a relatively straightforward task and has been widely addressed in literature, see e.g., \cite{rana1998parametric}, \cite{richiedeiTunedMassDamper2022} with passive, \cite{yuan2019mode} with semi-active, and \cite{preumontVibrationControlActive2011} with active tuning methods. In many applications, however, due to operational reasons, the absorber needs to be deployed in a \emph{non-collocated} manner, i.e., at a different location from the vibration suppression target. The design and the tuning of such an absorber is a considerably more difficult task because a part of the primary structure between the absorber and the suppression target has to be engaged in the action.

In this paper, we focus on the analysis and the experimental validation of \emph{non-collocated vibration absorption} (NCVA) using the \emph{delayed resonator} (DR) tuning procedure, which was presented recently by Olgac and Jenkins, \cite{jenkins2019real}, \cite{olgacActivelyTunedNoncollocated2021}, see also \cite{olgac2022benchmark}. The DR tuning scheme was proposed in the 90s by Olgac and his co-workers primarily for \emph{collocated vibration absorption}. Since then, it has become a traditional tool for (collocated) vibration absorption and a benchmark case revealing potential benefits of involving time delays in the feedback control law. DR is an active vibration absorber which is created using a decentralized time-delayed feedback scheme executed on the absorber's position \cite{olgac1994novel}, velocity \cite{filipovic2002delayed}, or acceleration \cite{olgac1997active2}. The time-delayed feedback is applied to turn the absorber substructure into an ideal resonator which completely suppresses the vibration. Please note that, traditionally, the DR is always tuned to be marginally stable (i.e., with a characteristic root pair placed at $\pm \jmath \omega$, $\omega$ being the excitation frequency) \cite{inman2022engineering}. This concept is extended to NCVA deployment, this time however, entailing the absorber as well as a part of the primary structure. This newly composed segment was named as the \emph{resonant substructure} \cite{olgacActivelyTunedNoncollocated2021}. In this paper, we present another analytical pathway reinforcing this critical aspect.

A practical benefit of the DR is that in the standard collocated setting, neither measurements at the primary structure, nor its physical parameters are involved in the DR design. From the wide literature on the DR, let us mention a torsional absorber \cite{hosek1997tunable}, an auto-tuning algorithm to enhance the robustness against uncertainties \cite{hosek2002single} and multiple DR application  \cite{jalili1999multiple}. Recent DR design and analysis topics include stability analysis \cite{vyhlidal2019analysis}, combination of position and velocity feedback \cite{oytun2018new}, DR with distributed delays \cite{7378505}, \cite{liu2024delayed}, targeting two frequencies \cite{valasek2019real}, enhancing the robustness in vibration absorption \cite{PilbauerRobust}, \cite{kure2024robust}, fractional order DR \cite{cai2023spectrum}, and the DR concept extension to two \cite{vyhlidalAnalysisOptimizedDesign2022}, \cite{vsika2021two}, and three \cite{vsika2024three}, \cite{benevs2024collocated} dimensional vibration absorption. Let us also point to delay-free resonator alternatives to the DR. In \cite{rivaz2007active} a PI acceleration feedback of the absorber was proposed, supplemented by a low and high pass filters. In \cite{filipovic1999vibration}, the concept of \emph{linear active resonator} (LAR) was introduced by Filipovic and Schröder. Conceptually, it mirrors the DR structure with a tuneable gain, which, however, is in a series with rational transfer function instead of the sole delayed term used in DR. %Thus, its implementation is not as straightforward as that of the DR.  

After outlining the development and recent topics on collocated DR concept, let us turn the attention to the non-collocated case. In \cite{jenkins2019real}, \cite{olgacActivelyTunedNoncollocated2021}, Olgac and Jenkins demonstrated that the DR is applicable for non-collocated vibration absorption of a system composed of a serial interconnection of flexibly linked masses. They showed that compared to the collocated DR design, the part of the primary structure between the position to be silenced and the position where the DR is deployed needs to be included in tuning the control logic. The DR together with this part of the primary forms the \emph{resonant substructure}, which needs to be tuned as a whole. It was also discussed in \cite{jenkins2019real}, \cite{olgacActivelyTunedNoncollocated2021} that the resonant substructure can only be identified under some restrictions on the physical deployment. Note that the findings of Olgac and Jenkins confirm the earlier results by Filipovic and Schröder presented in \cite{filipovic2001control}, where an analogous problem of remote (non-collocated) vibration suppression at a system composed of series of flexibly linked masses is solved by the LAR. 
% Next to the theoretical analysis performed over the transfer functions, an experimental validation is performed for the two-mass setup. 

Subsequent to works by Olgac and Jenkins, in \cite{silm_2024_spectral_design}, a spectral design of non-collocated vibration suppression performed primarily by a DR is presented. The method is based on a purely imaginary pair of active zero assignment to the transfer function between the excitation force and the target position to be silenced. As such, it is also applicable to setups, where the resonant substructure cannot be defined. 
For such cases, in order to increase the stability margin, an additional stabilizing controller is included and tuned. In the follow-up work \cite{Saldanha2023}, an output feedback controller is used to assign the active zero couple and to stabilize the system with feedback delay. The synthesis is performed by a spectral optimization. The results of both \cite{silm_2024_spectral_design} and \cite{Saldanha2023} are experimentally validated on a mechatronic setup with cart-bodies targeting vibration suppression at a single body. In \cite{saldanha2023IFAC} the simultaneous imaginary zero assignment and stabilization at the non-collocated vibration suppression is achieved by a DR with multiple static delay feedback. 

In this paper, building up on the results by Olgac and Jenkins, \cite{jenkins2019real}, \cite{olgacActivelyTunedNoncollocated2021}, we analyze further the problem of non-collocated vibration absorption utilizing delayed resonator. Instead of acceleration delayed feedback considered in Olgac and Jenkins, we consider position delayed feedback to tune the absorption properties of the resonant substructure. Compared to \cite{jenkins2019real}, \cite{olgacActivelyTunedNoncollocated2021}, and also to \cite{filipovic2001control}, in the problem analysis and control synthesis, we avoid derivation of the transfer function from the continuous time model. The analysis and tuning is performed directly over the system model matrices which makes it numerically efficient even for large number of masses. An eminent novelty of this paper stems in the experimental validation of the non-collocated vibration absorption by the DR. It is performed on a mechanical system composed of a series of cart-bodies connected with springs elements and actuated by voice-coils. It is shown that for a selected excitation frequency, almost ideal vibration suppression of any of the carts, both collocated and non-collocated, vis-a-vis the DR position, can be achieved.

The rest of the paper is composed as follows. In \cref{sec:main_result:tuning_DR}, the problem of targeted NCVA is formulated. Subsequent \cref{sec:control-law-desing} demonstrates the key role of resonant substructure and outlines the design of the delayed position feedback. A thorough experimental validation is presented in \cref{sec:case_study}, and in the last \cref{sec:conclusion}, a summary and further research directions are provided.

%% MAIN RESULT =================================================================
\section{Problem formulation}
\label{sec:main_result:tuning_DR}
\begin{figure*}
    \centering
    \includegraphics[scale=1]{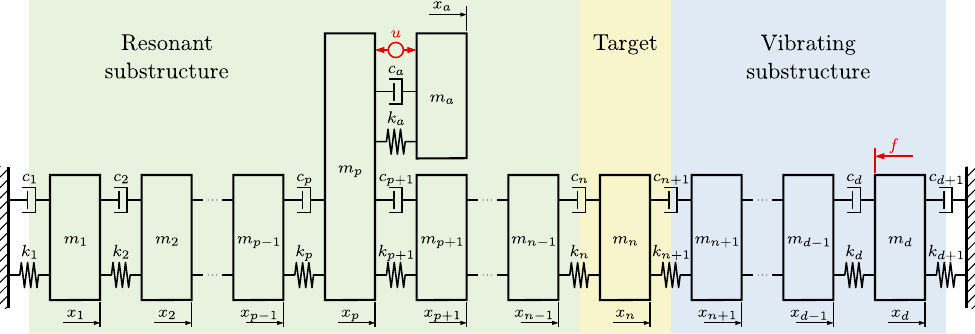}% Original size 468x160pt
    \caption{A general scheme of multi-body structure consisting of a series of linearly linked masses $m_1,\ \ldots,\ m_d$, being excited by an $\omega$-harmonic force $f$, together with an active absorber $m_a$. The structure can be split into: i) resonant substructure tuned by inner feedback $u(t)$ to resonate at frequency $\omega$, ii) target mass $m_n$ to be silenced, which is non-collocated with the absorber deployment at $m_p$, and iii) vibrating substructure.}
    \label{fig:splitting:general_scheme_split}
\end{figure*}

Consider a mechanical structure shown in \cref{fig:splitting:general_scheme_split} composed of $d$ linearly chained masses $m_i$ which are interconnected to its neighbors via springs $k_i$, $k_{i+1}$ and dampers $c_i$ and $c_{i+1}$. The first and the last masses $m_1$ and $m_d$ are connected to a rigid frame. A harmonic disturbance force
\begin{equation}
    \label{eq:general:excitation}
    f(t) = F \cos \left(\omega t \right)
\end{equation}
with amplitude $F$ and frequency $\omega$ acts on the last mass $m_d$ causing the whole structure to vibrate. A DR absorber with mass $m_a$ is deployed at the mass $m_p$ through the spring $k_a$, the damper $c_a$ and the actuator $u(t)$. The target NCVA mass is denoted as $m_n$. 

The setup is modeled by a set of second-order linear equations
\begin{equation}\label{eq:systemmodel}
    M\ddot{x}(t) + C\dot{x}(t) + Kx(t) = B_f f(t) + B_u u(t),
\end{equation}
where $x(t)=\begin{bmatrix}x_a(t) & x_1(t) & \dots & x_d(t) \end{bmatrix}^\mathsf{T}$ is a vector of displacements of the masses, mass matrix is given by
\begin{equation}
     M = \text{diag}\left(m_a, m_1,  \dots, m_p, \dots, m_n, \dots, m_{d-1}, m_d\right),
  \end{equation}
and stiffness $K$ and damping $C$ matrices are composed considering the rules of parallel spring interconnections. The matrices are omitted here due to space constraints, but they are given in the Case study validation section for the considered three-cart configuration.
%\ddeltext{by}
 % \begin{equation}
 %     K
 %     =
 %     \begin{bmatrix}
 %         k_a & & & & -k_a & & & & 0\\
 %         & k_1 + k_2 & -k_2 & & & & & & & \\
 %         & -k_2      & k_2 + k_3 & -k_3 & & & \\
 %         & & \ddots & \ddots & \ddots \\ 
 %         - k_a & & & -k_{p} & k_{p} + k_{p+1} + k_a & -k_{p+1} & & \\
 %         & & & & \ddots & \ddots & \ddots \\
 %         & & & & & -k_{t} & k_{t} + k_{t+1} & k_{t+1}\\
 %         & & & & & & \ddots & \ddots & \ddots \\
 %         & & & & & & -k_{d-1} & k_{d-1} & - k_{d} \\
 %         0 & & & & & & & -k_{d} & k_{d} + k_{d+1} \\
 %     \end{bmatrix},
 % \end{equation}
 %  \colorbox{yellow}{Needs reformulating} \mmodtext{K is a tridiagonal matrix of corresponding couplings that is easy to obtain}
 % while damping matrix $C$ is of a same structure as $K$, but each corresponding stiffness $k_i$ is replaced by $c_i$. 
The input matrices are given as
\begin{align}
     B_f &=E_d, \\
     B_u &=E_a - E_p,
\end{align}
where the vectors encoding position of absorber mass $m_a$ and chain masses $m_i$ are defined as
\begin{align}
     E^\mathsf{T}_a &=\begin{bmatrix} 1 & o_{d}^\mathsf{T} \end{bmatrix}, \\
     E^\mathsf{T}_i &=\begin{bmatrix} o_{i}^\mathsf{T} & 1 & o_{d-i}^\mathsf{T} \end{bmatrix} \ i=1, \dots, d,
\end{align}
with $o_l$ denoting $l$-dimensional zero vector.
 
\section{Control law design}
\label{sec:control-law-desing}
The control objective is to achieve complete vibration suppression at the target mass $m_n$ by applying the DR position feedback
\begin{equation}
    \label{eq:general:cDR_position}
    u(t) = g x_a \left( t - \tau \right)
\end{equation}
with the gain $g$ and the delay $\tau$ being parameters to be tuned. Note that the feedback is taken from the absorber position only and no measurements from the primary structure are considered. Note also that the feedback \eqref{eq:general:cDR_position} differs from that in \cite{jenkins2019real}, \cite{olgacActivelyTunedNoncollocated2021}, where acceleration feedback was used. 

Using $x_a(t) = E^T_a x(t)$ allows us to write characteristic matrix of closed loop
 \begin{equation}\label{eq:setup-char-matrix}
     R(s; g, \tau) = Ms^2 + Cs + K - g B_u E_a^\mathsf{T} e^{-s\tau},
 \end{equation}
 which can be decomposed into resonant-target-vibrating substructures (see \cref{fig:splitting:general_scheme_split}) signing them with R, T and V designations
 \begin{equation}\label{eq:expR}
     R(s; g, \tau) = 
     \begin{bmatrix}
         A_\mathrm{R}(s; g, \tau) & a_\mathrm{R}(s) & O \\
         a_\mathrm{R}^\mathsf{T}(s) & a_\mathrm{T}(s) & a_\mathrm{V}^\mathsf{T}(s) \\
         O^\mathsf{T} & a_\mathrm{V}(s) & A_\mathrm{V}(s)
     \end{bmatrix},
 \end{equation}
 where the dimensions of the functional submatrices are: $n\times n$ for $A_\mathrm{R}(s; g, \tau)$, $n \times 1$ for $a_\mathrm{R}(s)$, $d-n \times 1$ for $a_\mathrm{V}(s)$, $d-n \times d-n$ for $A_\mathrm{V}(s)$, $n \times d-n$ for zero-matrix $O$, and $a_\mathrm{T}(s)$ is scalar.%Note that $n$ is the number associated with the target mass.
 % where all blocks are obtained by taking according slices of $R(s; g, \tau)$, i.e, $A_R(s; g, \tau)$ is obtained by taking first $t$ rows and $t$ columns, $a_R(s)$ is obtained by taking $t$ rows of $(t+1)$th column, $a_t(s)$ is element from $(t+1)$th row and $(t+1)$th column, $A_V(s)$ is obtained by taking last $d-t$ rows and $d-t$ columns and $a_V(s)$ is obtained by taking $d-t$ rows of $(t+1)$th column.
 
The transfer function between the excitation force $f(s)$ and the position of the target mass $x_n(s)$ is given by
\begin{equation}\label{eq:TF}
    P(s; g, \tau)=\frac{x_n(s)}{f(s)}= E_\mathrm{T}^\mathsf{T} R^{-1}(s; g, \tau) B_f.
\end{equation}
% finding suitable DR parameters that fullfill $P(\jmath \omega, g, \tau)=0$. 
From the transfer function analysis, it was shown by both Olgac and Jenkins \cite{jenkins2019real}, \cite{olgacActivelyTunedNoncollocated2021} considering the DR and by Filipovic and Schröder \cite{filipovic2001control}, considering LAR, that the poles of the resonant substructure become the zeros of the transfer function \eqref{eq:TF}. In the following proposition, we confirm the result without the need of deriving the respective transfer function. As such, the validity of this claim can be easily extended towards setups with large number of masses, for which derivation of the transfer function would be cumbersome or even numerically risky.
% However, the control law parameters $g, \tau$ can be determined from resonant substructure part $A_R(s; g, \tau)$ only as will be demonstrated in the following proposition:
\begin{proposition}
    \label{prop:poles_resonating_to_zeros_overall}
    The poles of the resonant substructure, i.e., the roots of the equation
    \begin{equation}\label{eq:prop}
        \det \left( A_\mathrm{R}(s; g, \tau) \right) = 0,
    \end{equation}
    are zeros of the transfer function \eqref{eq:TF}.
\end{proposition}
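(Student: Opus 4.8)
The plan is to avoid forming $P$ explicitly and instead work with the single scalar entry of $R^{-1}$ that it represents. Since $E_\mathrm{T}$ picks out the target coordinate $x_n$ (row $n+1$ in the ordering $[x_a,x_1,\dots,x_d]$) and $B_f=E_d$ picks out the last mass, which lies in the vibrating block, the transfer function \eqref{eq:TF} is exactly the $(n+1,\,d+1)$ entry of $R^{-1}$. By Cramer's rule, $P=\det(\hat R)/\det(R)$, where $\hat R$ is $R$ with its $(n+1)$-th column replaced by $B_f$. Because $B_f$ is a unit vector, expanding $\det(\hat R)$ along that column collapses it, up to sign, to the minor of $R$ obtained by deleting the last row and the target column. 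The whole claim then reduces to showing that this minor carries $\det(A_\mathrm{R})$ as a factor.

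First I would write out that minor. Deleting the last (vibrating) row and the target column from the arrow-shaped $R$ in \eqref{eq:expR} leaves the $d\times d$ matrix
\begin{equation}
 N=\begin{bmatrix} A_\mathrm{R} & O \\ a_\mathrm{R}^\mathsf{T} & a_\mathrm{V}^\mathsf{T} \\ O^\mathsf{T} & \tilde A_\mathrm{V}\end{bmatrix},
\end{equation}
where $\tilde A_\mathrm{V}$ is $A_\mathrm{V}$ with its last row removed, and the crucial feature -- inherited from the zero off-diagonal blocks $O$ in \eqref{eq:expR} -- is that the resonant columns vanish on every vibrating row and the vibrating columns vanish on every resonant row. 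I would then expand $\det(N)$ by the generalized Laplace rule along the first $n$ (resonant) columns. A choice of $n$ rows gives a nonzero resonant-column minor only if those rows lie among the first $n+1$ (the resonant rows plus the target row), since the resonant columns are zero on all other rows. For any such selection that omits one resonant row, the complementary row set contains that resonant row, whose entries in the vibrating columns are all zero, so the complementary minor has a zero row and vanishes. The only surviving term corresponds to choosing exactly the $n$ resonant rows, and it yields
\begin{equation}
 \det(N)=\det(A_\mathrm{R})\,\det\!\begin{bmatrix} a_\mathrm{V}^\mathsf{T} \\ \tilde A_\mathrm{V}\end{bmatrix},
\end{equation}
exhibiting the desired factor $\det(A_\mathrm{R})$ in the numerator of $P$.

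Finally I would close the argument by ruling out cancellation: a root $s_0$ of $\det(A_\mathrm{R})$ is a genuine zero of $P$ only if it is not simultaneously a pole, i.e. $\det(R(s_0))\neq0$. Expanding $\det(R)$ with the target coordinate bordering the block-diagonal part $\operatorname{diag}(A_\mathrm{R},A_\mathrm{V})$ gives $\det(R)=\det(A_\mathrm{R})\det(A_\mathrm{V})\,a_\mathrm{T}-\det(A_\mathrm{V})\,a_\mathrm{R}^\mathsf{T}\operatorname{adj}(A_\mathrm{R})a_\mathrm{R}-\det(A_\mathrm{R})\,a_\mathrm{V}^\mathsf{T}\operatorname{adj}(A_\mathrm{V})a_\mathrm{V}$, which at $s_0$ reduces to $-\det(A_\mathrm{V}(s_0))\,a_\mathrm{R}^\mathsf{T}\operatorname{adj}(A_\mathrm{R}(s_0))a_\mathrm{R}$ and is nonzero under the non-degeneracy of the deployment noted in the Introduction. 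Hence $P(s_0)=0$, and the roots of \eqref{eq:prop} are zeros of \eqref{eq:TF}.

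I expect the factorization of $\det(N)$ -- specifically the bookkeeping that makes all but one complementary minor vanish -- to be the main obstacle, together with the verification that $\det(A_\mathrm{R})$ appears in the numerator but not as an overall factor of $\det(R)$ (so that the zeros are not spuriously cancelled). An equivalent and perhaps cleaner route is to solve $R\,y=B_f$ block-wise, eliminate the resonant and vibrating unknowns by Schur complements to obtain $y_{n+1}=P$, and then multiply through by $\det(A_\mathrm{R})$, turning $A_\mathrm{R}^{-1}$ into $\operatorname{adj}(A_\mathrm{R})/\det(A_\mathrm{R})$, which exposes the same factor directly and makes the absence of cancellation with $\det(R)$ transparent.
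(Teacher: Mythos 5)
Your proof is correct, and it takes a genuinely different route from the paper's. The paper never forms a minor or invokes Cramer's rule: it encodes the zeros of \eqref{eq:TF} as roots of the bordered determinant $z(s;g,\tau)$ in \eqref{eq:prop:z}, writes $z=\det\left(R\right)\left(E_\mathrm{T}^\mathsf{T}R^{-1}B_f\right)$, factors $\det\left(R\right)=\det\left(A_\mathrm{R}\right)\det\left(H\right)$ by a Schur complement on the resonant block, expresses $E_\mathrm{T}^\mathsf{T}R^{-1}B_f$ through $H^{-1}$, and recombines everything into $\det\left(A_\mathrm{R}\right)$ times a single bordered determinant whose cofactor expansion along its last row eliminates all dependence on $A_\mathrm{R}$. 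You instead identify the numerator of $P$ directly, via Cramer's rule, as the minor $N$ of $R$ (forcing row and target column deleted) and factor $\det(N)=\det\left(A_\mathrm{R}\right)\det\left[\begin{smallmatrix} a_\mathrm{V}^\mathsf{T} \\ \tilde A_\mathrm{V}\end{smallmatrix}\right]$ by generalized Laplace expansion along the resonant columns; your bookkeeping is exactly right -- the zero blocks $O$ annihilate every complementary minor except the one attached to the leading $n$ rows. Your route buys two things the paper's does not: it is a pure polynomial identity, so it dispenses with the intermediate invertibility hypotheses on $A_\mathrm{R}$, $a_\mathrm{T}$ and $A_\mathrm{V}$ that the paper assumes (and removes only implicitly, by continuity of the final factorization), and it explicitly confronts pole--zero cancellation, which the paper elides by in effect treating roots of $z$ as zeros. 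The paper's route buys compactness, freedom from sign bookkeeping, and a one-stroke exhibition of the cofactor that multiplies $\det\left(A_\mathrm{R}\right)$, consistent with its ``no transfer-function derivation'' framing.

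One soft spot: your no-cancellation step rests on $\det\left(A_\mathrm{V}(s_0)\right)a_\mathrm{R}^\mathsf{T}(s_0)\operatorname{adj}\left(A_\mathrm{R}(s_0)\right)a_\mathrm{R}(s_0)\neq 0$, which is a genericity assumption, not something the Introduction actually establishes (the ``restrictions on the physical deployment'' mentioned there concern when a resonant substructure can be identified at all); it can fail for special parameter values, e.g., when the vibrating substructure shares a root with $\det\left(A_\mathrm{R}\right)$. For the roots that matter in the design, $s_0=\pm\jmath\omega$, no genericity is needed: the required overall stability \eqref{eq:alphaOSstab} already guarantees $\det\left(R(\pm\jmath\omega)\right)\neq 0$. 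Also note your construction of $N$ tacitly assumes $n<d$; in the edge case $n=d$ (the case study's $m_3$) your Cramer formulation degenerates gracefully to $P=\det\left(A_\mathrm{R}\right)/\det\left(R\right)$, a case the paper's decomposition of $B_f$ likewise handles only implicitly.
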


\begin{proof}
   Let us define
       \begin{equation}\label{eq:prop:z}
        z(s; g, \tau) = \det \left( \begin{bmatrix}
            R(s; g, \tau) & -B_f\\
            E_\mathrm{T} & 0
        \end{bmatrix} \right),
    \end{equation}
    so that the zeros of \eqref{eq:TF} are the roots of the equation
    \begin{equation}\label{eq:prop:zeros}
        z(s; g, \tau)  = 0.
    \end{equation}
    Assuming invertibility of $A_\mathrm{R}(s; g, \tau)$, $a_\mathrm{T}(s)$ and $A_\mathrm{V}(s)$, \eqref{eq:prop:z} can be rewritten to
    \begin{equation}
        \label{eq:prop:z-schur-1}
        z(s) = \det\left( R(s; g, \tau) \right) \left(E_\mathrm{T}^\mathsf{T} R^{-1}(s; g, \tau) B_f \right).
    \end{equation}
    % \begin{equation}
    %     \label{eq:prop:z-schur-1}
    %     z(s) = \det\left( R(s; g, \tau) \right) \det\left( 0 + E^T_t R^{-1}(s; g, \tau) B_f \right).
    % \end{equation}
    % \ddeltext{Both its parts can further be expressed as}
    The first factor of \eqref{eq:prop:z-schur-1} can be expressed using Schur complement as
    \begin{equation}
        \label{eq:prop:z-schur-2}
        \det\left( R(s; g, \tau) \right) = \det \left( A_\mathrm{R}(s; g, \tau) \right) \det \left( H \right),
    \end{equation}
    where
    \begin{equation*}
        H =
        \begin{bmatrix}
            a_\mathrm{T}(s) - a_\mathrm{R}^\mathsf{T}(s) A_\mathrm{R}^{-1}(s; g, \tau) a_\mathrm{R} (s) & a_\mathrm{V}^\mathsf{T}(s) \\
            a_\mathrm{V}(s) & A_\mathrm{V}(s)
        \end{bmatrix}.
    \end{equation*}
    The second factor of \eqref{eq:prop:z-schur-1} can be rewritten into %using the Scur complement again, becomes % schur inverse matrix lemma
    \begin{equation}
        \label{eq:prop:z-schur-3}
        E_\mathrm{T}^\mathsf{T} R^{-1}(s; g, \tau) B_f = \begin{bmatrix}
             1 & o_{d-n}^\mathsf{T}
        \end{bmatrix} H^{-1} \begin{bmatrix}
            0 \\ b_f
        \end{bmatrix},
    \end{equation}
    where
    \begin{equation*}
        E_\mathrm{T} = \begin{bmatrix}
            o_{n} \\ 1 \\ o_{d-n}
        \end{bmatrix},\ 
        B_f = \begin{bmatrix}
            o_{n} \\ 0 \\ b_f
        \end{bmatrix}.
    \end{equation*}
    Substituting \eqref{eq:prop:z-schur-2} and \eqref{eq:prop:z-schur-3} into \eqref{eq:prop:z-schur-1} gives % using the Schur complement backwards
    \begin{align}
        z(s) &= \det \left( A_\mathrm{R}(s; g, \tau) \right) \cdot\nonumber\\
        &\det \left(\begin{bmatrix}
            a_\mathrm{T}(s) - a_\mathrm{R}^\mathsf{T}(s) A_\mathrm{R}^{-1}(s; g, \tau) a_\mathrm{R} (s) & a_\mathrm{V}^\mathsf{T}(s) & 0 \\
            a_\mathrm{V}(s) & A_\mathrm{V}(s) & b_f \\
            1 & o_{d-n}^\mathsf{T} & 0
        \end{bmatrix}\right).
    \end{align}
    Regarding the second determinant, by applying the cofactor expansion along the last row the dependency on $A_\mathrm{R}(s; g, \tau)$ disappears. Thus, the roots of \eqref{eq:prop} are among the roots of \eqref{eq:prop:zeros}, and they form zeros of \eqref{eq:TF}. 
\end{proof}
 
By the \cref{prop:poles_resonating_to_zeros_overall}, the task of complete vibration suppression at $m_n(t)$ reduces to assigning a pair of complex conjugate poles at $\pm \jmath \omega$ to the resonant substructure, i.e. ensuring that $\det \left( A_\mathrm{R}(\jmath\omega; g, \tau) \right) = 0$, which can be rewritten into
\begin{equation}\label{eq:char_matrix-resonating}
    \det \left( M_\mathrm{R}s^2 + C_\mathrm{R}s + K_\mathrm{R} - g b_u e_a^\mathsf{T} e^{-s\tau}\right) = 0,
\end{equation}
where matrices $M_\mathrm{R},\ C_\mathrm{R},\ K_\mathrm{R}$ are obtained by taking first $T$ rows and first $T$ columns from theirs respective counterparts $M,\ C,\ K$, and vectors $b_u$, $e_a$ are obtained taking first $T$ rows from vectors $B_u$ and $E_a$, respectively.

Equation \eqref{eq:char_matrix-resonating} can be rewritten into product of two determinants
\begin{multline*}
    \det \left( M_\mathrm{R}s^2 + C_\mathrm{R}s + K_\mathrm{R} \right) \cdot \nonumber\\ \det \left(I - g \left( M_\mathrm{R}s^2 + C_\mathrm{R}s + K_\mathrm{R} \right)^{-1} b_u e_a^\mathsf{T} e^{-s\tau}\right) = 0,
\end{multline*}
where the first part is independent of DR parameters and therefore can be omitted. Using the Weinstein-Aronszajn identity, the second part can be rewritten into
\begin{equation}\label{eq:weinstein-aroszajn}
    1 - g e_a^\mathsf{T} \left( M_\mathrm{R}s^2 + C_\mathrm{R}s + K_\mathrm{R} \right)^{-1} b_u e^{-s\tau} = 0.
\end{equation}
Rearranging \eqref{eq:weinstein-aroszajn} and imposing a resonant root at $s=\jmath\omega$, we can write
\begin{equation} \label{eq:control_design:g_tau_jomega}
    g \mathrm{e}^{-\jmath \omega \tau} =  p(\jmath \omega),
\end{equation}
where 
\begin{equation*}
    p(\jmath \omega)= e_a^\mathsf{T} \left( M_\mathrm{R}(\jmath\omega)^2 + C_\mathrm{R} (\jmath\omega) + K_\mathrm{R} \right)^{-1} b_f.
\end{equation*}

Solving \eqref{eq:control_design:g_tau_jomega} yields two infinite sets of solutions due to the periodicity of argument and symmetry of modulo, one with a positive gain
\begin{equation}
    \label{eq:control_design:g_1_tau_1}
    g = \left| p(\jmath \omega) \right|,\ \tau = \frac{1}{\omega} \left( -\arg(p(\jmath \omega)) + 2k \pi \right),\\ k \in Z,
\end{equation}
and the other with negative gain
\begin{equation}
    \label{eq:control_design:g_2_tau_2}
    g = -\left| p(\jmath \omega) \right|,\ \tau = \frac{1}{\omega} \left( \pi -\arg(p(\jmath \omega)) + 2k \pi \right),\ k \in Z.
\end{equation}
From the infinite set of delay values, it is advisable to select the smallest possible delay and the corresponding gain. In some frequency ranges, however, the larger delay variants can give better results (demonstrated in the Case study validation section below).

Mounting DR absorber on mass $m_p$ with active feedback \eqref{eq:general:cDR_position} and parameters $g, \tau$ selected as defined above ensures that a pair of conjugate zeros is assigned to the imaginary axis at $\pm \jmath \omega$. Thus, assuming the overall system is stable, this ensures that signal of frequency $\omega$ does not pass through the system and the harmonic response of the target mass is fully suppressed. Clearly this DR synthesis needs to be supplemented with a stability check. Also note that, as per \cite{vyhlidal_2019_analysis}, marginal stability of the DR is preferred even for the collocated vibration suppression. This requirement naturally applies for the non-collocated vibration absorption, where one can expect even stronger dependency of the overall system stability on the stability posture of the resonant substructure.
 
Defining the \emph{spectral abscissa} of the overall system as
\begin{equation}\label{eq:alphaOS}
     \alpha_\mathrm{OS}(g, \tau) = \max \left\{ \Re(s); \det\left(R(s; g, \tau)\right) = 0, \right\}
\end{equation}
where $R(s; g, \tau)$, is given by \eqref{eq:setup-char-matrix}, and the \emph{spectral abscissa} of the resonant substructure
\begin{equation}\label{eq:alphaRS}
    \alpha_\mathrm{RS}(g, \tau) = \max \left\{ \Re(s); \det\left(A_\mathrm{R}(s; g, \tau)\right) = 0, \right\}
\end{equation}
where $A_\mathrm{R}(s; g, \tau)$ is defined in \eqref{eq:expR}, the stability condition of the overall system reads as
\begin{equation}\label{eq:alphaOSstab}
    \alpha_\mathrm{OS}(g, \tau) < 0,
\end{equation}
while the marginal stability condition for the resonant substructure reads as 
\begin{equation}\label{eq:alphaRSstab}
    \alpha_\mathrm{RS}(g, \tau) = 0.
\end{equation}
 
%% CASE STUDY ====================================================================
\begin{figure*}[t]
    \centering
    % Original size 180x84 mm
    \includegraphics[scale=1]{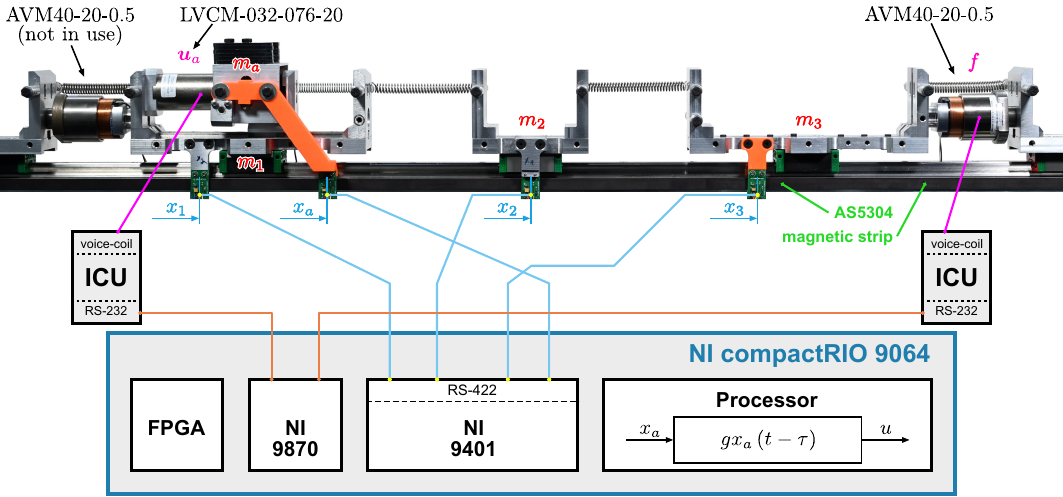}
    \caption{Mechatronic setup designed for experimental validation of non-collocated vibration absorption accompanied with the control implementation scheme.}
    \label{fig:case_study:implementation}
\end{figure*}
\section{Case study validation}
\label{sec:case_study}
The validation of the non-collocated vibration absorption using DR with position feedback is performed on an experimental setup as shown in \cref{fig:case_study:implementation}. With a reference to \cref{sec:main_result:tuning_DR} it consists of three masses, i.e., $d=3$, where the DR is deployed on the first mass, i.e., $p=1$.
%which consists of three carts with the masses $m_i$, $i=1..3$ interconnected by springs and dampers $k_j, c_j, j=1..4$ either to its neighbors or to a fixed frame. An absorber $m_a$ is attached to the cart $m_1$ by means of a spring $k_a$, a damper $c_a$, and an actuator $u(t)$ that is used for active feedback control. 
%Regarding the outputs of this structure, the displacements of all carts $x_1(t)$, $x_2(t)$, $x_3(t)$ and $x_a(t)$ are measured. Last but not least, a harmonic disturbance $f(t)$ is applied to the cart $m_3$. This setup can perform a collocated vibration suppression as well as non-collocated with the offset of 1 or 2 carts.
% \begin{figure}
%    \centering
%    \includegraphics[width=\textwidth]{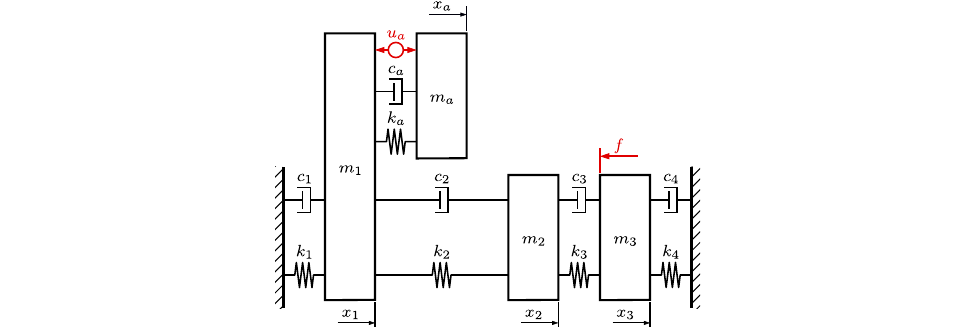}
%    \caption{Scheme of the model of the experimental structure.}
%    \label{fig:case_study:validation_scheme}
%\end{figure}
% Mechanical assembly itself
%An experimental testbed has been built following \cref{fig:case_study:validation_scheme} and is depicted in \cref{fig:case_study:implementation}.
The setup consists of a rail to which carts $m_1$, $m_2$ and $m_3$ are attached by industrial ball bearings. Two additional carts with mechanical brake, % instead of ball bearing
one at each end, are attached representing a rigid frame. The absorber $m_a$ is mounted directly on a cart $m_1$ where another (smaller) rail is installed. All carts are flexibly interconnected by springs. 
To measure the displacement of the carts, a multi-pole magnetic strip with a resolution of $\SI{25}{\micro\metre}$ is installed on the setup frame. Each cart is equipped with an AMS AS5304 incremental position sensor with Hall elements reading a quadrature signal. % From Haik's article
% Actuation
Three linear voice-coil motors (LVCM) are installed to actuate the setup. From left to right in \cref{fig:case_study:implementation} we have: 1) Akribys AVM40-20-0.5 LVCM installed between the frame and the cart $m_1$, which is not used in this experiment; 2) 
%Let us note here however, that this motor has % already
%been preinstalled and might be used in the future research to provide a stabilization or to optimize the spectra of the system without getting the mechanical setup disassembled and assembled again. 
Moticont LVCM-032-076-20 used as the DR actuator creating the input $u(t)$; and 3) another Akribys AVM40-20-0.5 generating the disturbance harmonic force $f(t)$. 
These voice-coil actuators are accompanied by two custom made Instrument Control Units ({ICU}s) from PearControl.
% Control and data acquisition
The control algorithms and instrumentation are implemented in LabVIEW\textsuperscript{\texttrademark} 2021 and are executed on the NI compactRIO 9064 industrial control system from National Instruments with a sampling rate of \SI{1}{kHz}. Fast sensor measurement and quadrature signal encoding is executed on an embedded FPGA module with a sampling rate of \SI{48}{MHz}. Two plug-in modules are used in the control unit: NI~9870 used to communicate with the {ICU}s via RS-232 serial lines, and NI~9401 - a digital I/O module used for reading sensors.
% Mechanical notes.
The pull springs are pre-loaded such that the steady-state displacements of the voice coils are in the middle of their strokes. This setting serves for better actuator linearity.

Corresponding to the model in \eqref{eq:systemmodel}, the mass, damping and stiffness matrices are defined as 
\begin{align*}
M & = \text{diag} \left( m_a,\ m_1,\ m_2,\ m_3 \right),\\
     C &
     =
     \begin{bmatrix}
         c_a & -c_a & 0 & 0 \\
         -c_a & c_1 + c_2 + c_a & -c_2 \\
         0 & -c_2 & c_2 + c_3 & -c_3 \\
         0 & 0 & -c_3 & c_3 + c_4
     \end{bmatrix},\\
     K &
     =
     \begin{bmatrix}
         k_a & -k_a & 0 & 0 \\
         -k_a & k_1 + k_2 + k_a & -k_2 \\
         0 & -k_2 & k_2 + k_3 & -k_3 \\
         0 & 0 & -k_3 & k_3 + k_4
     \end{bmatrix},
\end{align*}
with the position vector 
\begin{equation*}
    x(t) = \begin{bmatrix}x_a(t) & x_1(t) & x_2(t) & x_3(t) \end{bmatrix}^\mathsf{T},
\end{equation*}
and input vectors
\begin{equation*}
    B_f=\begin{bmatrix} 0 & 0 & 0 & 1 \end{bmatrix}^\mathsf{T},
\end{equation*}
\begin{equation*}
    B_u=\begin{bmatrix} 1 & -1 & 0 & 0 \end{bmatrix}^\mathsf{T}.
\end{equation*}
In the case study, we will sequentially consider all the three  masses to be stopped: i.e. from collocated vibration absorption ($n = 1$) to non-collocated vibration absorption ($n = 2, 3$). Thus, the system output $y_n(t) = x_n(t)$ varies with respect to the cart to be stopped 
\begin{equation}
    y_n(t) = E_i x(t), \quad i=1,2,3,
\end{equation}
where 
\begin{align*}
    E_1 &= \begin{bmatrix} 0 & 1 & 0 & 0 \end{bmatrix}^\mathsf{T}, \\
    E_2 &= \begin{bmatrix} 0 & 0 & 1 & 0 \end{bmatrix}^\mathsf{T}, \\
    E_3 &= \begin{bmatrix} 0 & 0 & 0 & 1 \end{bmatrix}^\mathsf{T}, \\
\end{align*}
represent the position of the target mass.
% Identification
% \subsection{Identification and parameters}
The structural parameters of the setup are given in \cref{tab:case_study:parameters}. Note that stiffness and damping characteristics are determined experimentally. Clearly, the linear model does not capture all the mechanical phenomena, such as the dry friction of the bearings and non-linearities of the LVCM at large amplitude motion. 
%o obtain the parameters of the experimental structure, each cart was measured to obtain its mass using a digital kitchen scale. Next, step responses of each cart separately and a combination of carts were used to obtain the stiffness and damping of the structure.
%The parameters of the structure are listed in %%\cref{tab:case_study:parameters}.
\begin{table}
    \centering
    \begin{tabular}{|c|c|c|c|}
        \hline
        & mass & stiffness & damping \\
        $i$ & $m_i$ & $k_i$ & $c_i$ \\
        $\left[ \mathrm{-} \right]$ & $\left[ \mathrm{kg} \right]$ & $\left[ \mathrm{N\,m^{-1}} \right]$ & $\left[ \mathrm{N\; s\; m^{-1}} \right]$ \\
        \hline
         a & 0.520 &  407 & 1.80 \\
         1 & 1.175 & 1001 & 4.35 \\
         2 & 0.509 &  749 & 0.85 \\
         3 & 0.705 &  711 & 1.85 \\
         4 &     - &  950 & 4.95 \\
         \hline
    \end{tabular}
    \caption{Identified parameters of the experimental setup}
    \label{tab:case_study:parameters}
\end{table}

\subsection{Assessing the excitation frequency and feedback design}
First, a numerical study is performed to find the frequency intervals in which the resonant substructure and the overall system are quasi-stable and stable, respectively. The results of this analysis are shown in \cref{fig:NCss_spectral_abscissa} in terms of spectral abscissas evaluated over the frequency range $\tilde\omega\in[2, 12]\:\mathrm{Hz}$. 
The frequency range is covered by a dense grid. For each frequency grid point, the feedback parameters are evaluated by \eqref{eq:control_design:g_2_tau_2}, which provides smaller values of the delay in this particular application compared to \eqref{eq:control_design:g_1_tau_1}, considering the delay branches $k=0$ and $k=1$. Then, the corresponding spectral abscissas \eqref{eq:alphaOS} and \eqref{eq:alphaRS} are obtained by applying the function \emph{tds\_sa} of TDS-CONTROL toolbox \cite{appeltans_2022_tdsCtrl}. 
\begin{figure}[t]
    \centering
    \includegraphics[scale=1]{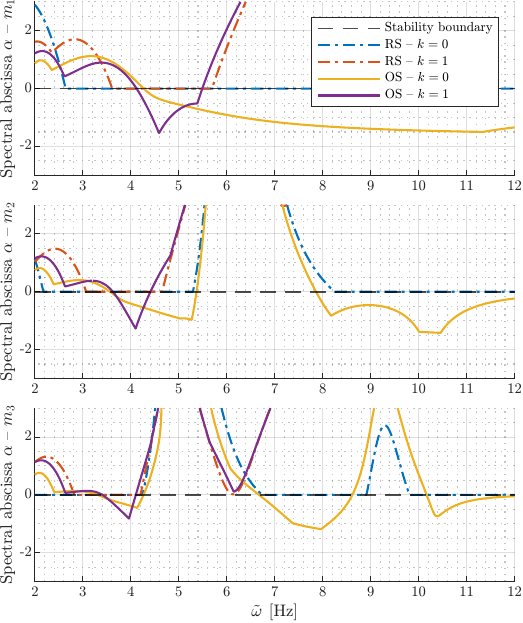}
    \caption{
    Spectral abscissas of the resonant substructure given by \eqref{eq:alphaRS} (RS -- dash-dot line) and of the overall setup given by \eqref{eq:alphaOS}  (OS -- solid line) for branches $k=0$ and $k=1$. %To achieve experimental validation for collocated case as well as for non-collocated cases at the same frequency $\omega = 4.20 \: \mathrm{Hz}$, branch $k=1$ is selected to stop mass $m_1$.
    }
    \label{fig:NCss_spectral_abscissa}
\end{figure}
This analysis is performed for all the considered cases $m_n,\ n=1, 2, 3$, i.e. for stopping $m_1$ (upper subfigure of \cref{fig:NCss_spectral_abscissa}), $m_2$ (middle subfigure of \cref{fig:NCss_spectral_abscissa}) and $m_3$ (lower subfigure of \cref{fig:NCss_spectral_abscissa}). The analysis provides the following applicable ranges (in Hz) for which both \eqref{eq:alphaOSstab} and \eqref{eq:alphaRSstab} are fulfilled simultaneously:
\begin{itemize}
    \item $m_1$: $\omega\in[4.27, 12]$ for $k=0$ and $\omega\in[4.13, 5.48]$ for $k=1$,
    \item $m_2$: $\omega\in[3.57, 5.28] \cup [8.26, 12]$ for $k=0$ and $\omega\in[3.63, 4.40]$ for $k=1$,
    \item $m_3$: $\omega\in[3.31, 4.26] \cup [6.75, 8.61] \cup [10.17, 12]$ for $k=0$ and $\omega\in[3.41, 4.10]$ for $k=1$.
\end{itemize}

As can be seen, the applicable ranges are relatively narrow already for taking the particular cases separately. Expectedly, the operating range becomes narrower when we search those frequencies at which all three masses can be stopped, one at a time,
\begin{equation}\label{eq:frequency-low}
    \tilde\omega\in[4.13, 4.22] 
\end{equation}
considering $k=1$ branch for $m_1$ and $k=0$ for $m_2$ and $m_3$,
and 
\begin{equation}\label{eq:frequency-high}
    \tilde\omega\in[8.2, 8.6]\cup[10.12, 12] 
\end{equation}
considering $k=0$ branch for all the targets.

\subsection{Analysis and experimental validation for low-frequency excitation}
In the analysis performed in \cite{vyhlidal_2019_analysis}, it was demonstrated that for the collocated case, the best performance of the DR in vibration suppression is achieved close to the resonant frequency of the passive absorber, which projects to the magnitude drop at the response of the overall system. Analogous results can be expected also for the non-collocated cases. As can be seen from the magnitude frequency responses of $P(\jmath \omega;0,0)$ by \eqref{eq:TF} in \cref{fig:case_study:spectral_sensitivity}, such minima appear for $\tilde\omega=\SI{4.42}{Hz}$ ($m_1$), $\tilde\omega=\SI{3.83}{Hz}$ ($m_2$) and $\tilde\omega=\SI{3.6}{Hz}$ ($m_3$). Balancing all these aspects, we select the excitation frequency $\omega=4.20 \: \mathrm{Hz}$ towards the experimental validation. Applying \eqref{eq:control_design:g_2_tau_2}, the following feedback parameters result:
\begin{itemize}
    \item stopping the body $m_1$ (collocated), with $k=1$
    \begin{equation}\label{eq:set1}
    \begin{aligned}
        g_1    &= -65.34 \: \mathrm{kg \, s^{-2}},
        \tau_1 &= 0.3263 \: \mathrm{s},
    \end{aligned}
\end{equation}
\item stopping the body $m_2$ (non-collocated), with $k=0$
\begin{equation}\label{eq:set2}
    \begin{aligned}
        g_2    &= -124.14 \: \mathrm{kg \, s^{-2}},
        \tau_2 &= 0.0165 \: \mathrm{s},
    \end{aligned}
\end{equation}
\item stopping the body $m_3$ (non-collocated), with $k=0$
\begin{equation}\label{eq:set3}
    \begin{aligned}
        g_3    &= -302.47 \: \mathrm{kg \, s^{-2}},
        \tau_3 &= 0.0146 \: \mathrm{s}.
    \end{aligned}
\end{equation}
\end{itemize}
As can be seen in the middle subfigure of \cref{fig:case_study:spectral_sensitivity} with amplitude responses, all the three controllers fully suppress vibration at the selected bodies at the frequency $\omega = 4.20 \: \mathrm{Hz}$, as required. However, notice that due to the {\em v-shape} of the characteristics close to the target frequency point, the robustness against the mismatch between the true and nominal frequencies is relatively small. This result also indicates the necessity of having very precise model of the resonant substructure for the feedback design.

\begin{figure}[t]
    \centering
    \includegraphics[scale=1]{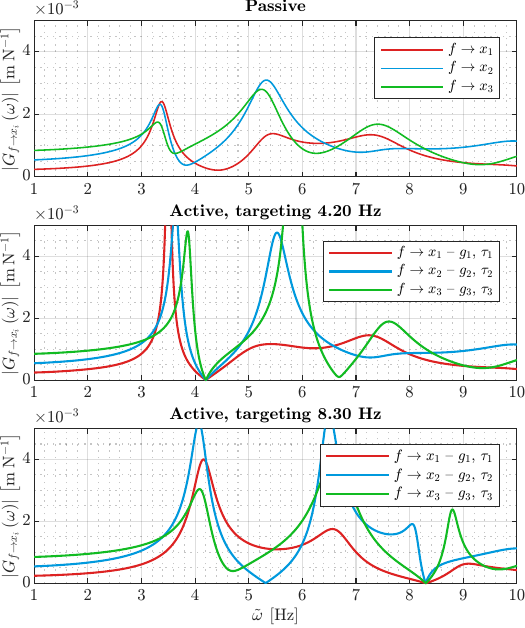}
    \caption{Amplitude frequency response of $P(\jmath \omega)$ by \eqref{eq:TF}. Without any control in the top, with controllers to stop individual carts at frequency $\omega=4.20 \: \mathrm{Hz}$ in the middle and at frequency $\omega=8.30 \: \mathrm{Hz}$ in the bottom.}
    \label{fig:case_study:spectral_sensitivity}
\end{figure}

% Note, however, that there is no guarantee that all three bodies can be silenced at the same frequency unless an additional controller is applied to stabilize the overall setup. 
 %It should also be noted that there is a common range at higher frequencies where all three bodies can be theoretically silenced, but we experience instability in this range. We suspect that custom ICUs are causing this instability due to a signal phase shift of the internal current-loop control. Unfortunately, these ICUs are black boxes and we cannot change the parameters of the internal current control loop.

%Having chosen the design frequency $\omega=4.20 \: \mathrm{Hz}$ and following \eqref{eq:control_design:g_tau_jomega}--\eqref{eq:control_design:g_2_tau_2}, the parameters of all three controllers were calculated. More specifically, to stop the body $m_1$ (the collocated case) the branch $k=1$ is used, which results in the parameter couple

To demonstrate the vibration suppression at the three different target bodies $m_1, m_2$ and $m_3$ for a single frequency $\omega=4.20 \: \mathrm{Hz}$, the following scenario is considered, with the experimental results shown in \cref{fig:exp_disp_all}, see also the video from the experiment\footnote{A video of the experiment is available at\\ \href{https://control.fs.cvut.cz/en/aclab/ncva}{https://control.fs.cvut.cz/en/aclab/ncva}}. The disturbance force $f$ given by \eqref{eq:general:excitation}, with $F=\SI{3}{N}$, starts to act on the mass $m_3$ at time $t = 5 \: \mathrm{s}$. After 10 seconds when we can observe the insufficient effect of passive absorption, the DR feedback \eqref{eq:general:cDR_position} with parameters \eqref{eq:set1} tuned to stop the mass $m_1$ is activated. After a short transient, the body $m_1$ is almost fully stopped. As can be seen in the detailed \cref{fig:exp_disp_m1}, the residual motion of measured $x_1$ is at the level of measurement (quantization) noise of position incremental sensor. The DR feedback is deactivated after \SI{15}{s}, i.e., at $t=30\: \mathrm{s}$. The passive regime lasts until $t=40\: \mathrm{s}$ when the DR feedback \eqref{eq:general:cDR_position} with parameters \eqref{eq:set2} tuned to stop the mass $m_2$ is activated. Similarly to the previous case, after a short transient, the body $m_2$ is almost fully stopped with residual swings of $x_3$ being at the measurement noise level as seen in the detailed \cref{fig:exp_disp_m2}. At $t=55\: \mathrm{s}$ the DR feedback is deactivated and the passive regime lasts till $t=65\: \mathrm{s}$, when the DR feedback \eqref{eq:general:cDR_position} with parameters \eqref{eq:set3} tuned to stop the mass $m_3$ is activated. The results are as good as for the previous two cases, despite the fact that the DR control action needs to propagate from $m_a$ through $m_1$, $m_2$ and all the flexible connections before it compensates the effect of the excitation force on $m_3$. Again, as seen in detailed \cref{fig:exp_disp_m3}, the residual deflections of $x_3$ is at the level of measurement noise. The DR feedback is turned off at $t=80\: \mathrm{s}$ and the experiment is completed by passive regime lasting till $t=86\: \mathrm{s}$. %According to our knowledge, it is for the first time when the spatial tuneability at the non-collocated vibration suppression problem was confirmed experimentally. 

\begin{figure}[t]
    \centering
    \includegraphics[scale=1]{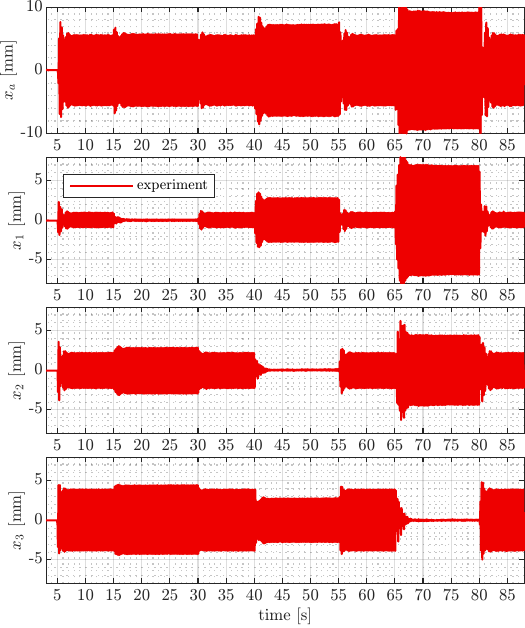}
    \caption{Experimental results of absorbing vibration excited by the disturbance force $f$ given by \eqref{eq:general:excitation}, with $F=\SI{3}{N}$ and $\omega=4.20 \: \mathrm{Hz}$, applying the DR position feedback \eqref{eq:general:cDR_position} with parameters: i) \eqref{eq:set1} active at $t\in[15, 30]\mathrm{s}$ to silence $x_1$ (collocated), ii) \eqref{eq:set2} active at $t\in[40, 55]\mathrm{s}$ to silence $x_2$ (non-collocated), and iii) \eqref{eq:set3} active at $t\in[65, 80]\mathrm{s}$ to silence $x_3$ (non-collocated).}
    \label{fig:exp_disp_all}
\end{figure}

%\subsection{Comparing the experimental and simulation results}
In a detailed look of the performance for $\omega=4.20\:\mathrm{Hz}$ in \cref{fig:exp_disp_m1}, \cref{fig:exp_disp_m2}, and \cref{fig:exp_disp_m3}, each DR setting is shown in comparison with simulations performed in Matlab-Simulink (ODE45 solver with \emph{RelTol} $10^{-6}$). A very good match between the simulation and the experimental results can be seen for each of the cases. 
% This is most likely the reason why the results in vibration suppression are so close to ideal. 
Notice that for all the three considered cases, the transients at silencing the target bodies are shorter for the experiments. It is due to the slip-stick effect of Coulomb friction, which naturally appears at the physical setup, but is not included in the linear model used for the simulations.
\begin{figure}
    \centering
    \includegraphics[scale=1]{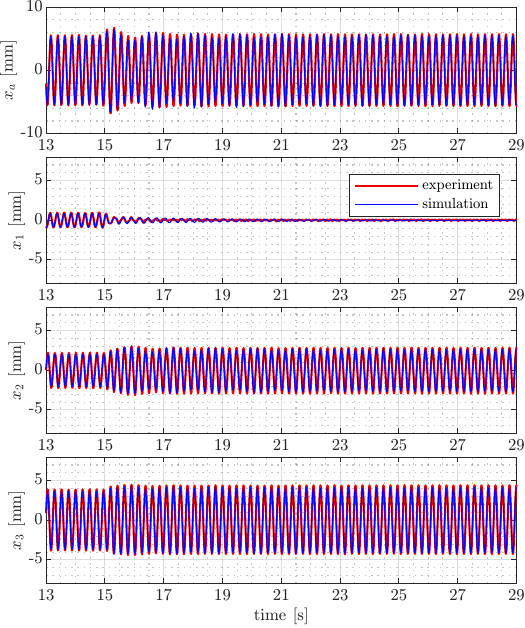} % 2 column layout
    \caption{Experimental results of collocated vibration absorption targeting $m_1$ -- detail of \cref{fig:exp_disp_all} in comparison with simulations.}
    \label{fig:exp_disp_m1}
\end{figure}
\begin{figure}
    \centering
    \includegraphics[scale=1]{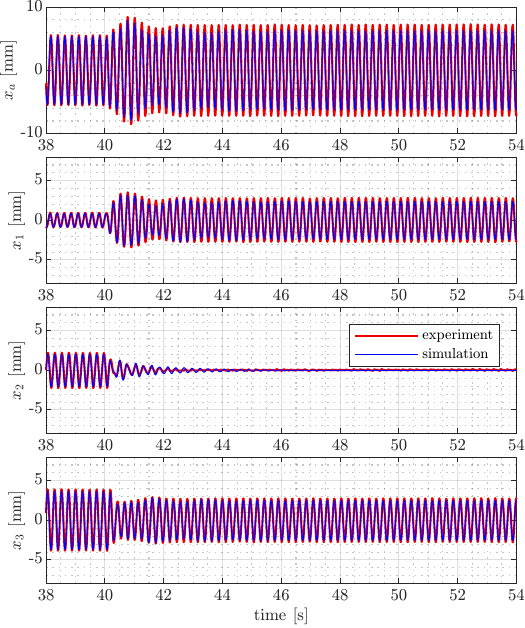} % double column
    \caption{Experimental results of non-collocated vibration absorption targeting $m_2$ -- detail of \cref{fig:exp_disp_all} in comparison with simulations.}
    \label{fig:exp_disp_m2}
\end{figure}
\begin{figure}
    \centering
    \includegraphics[scale=1]{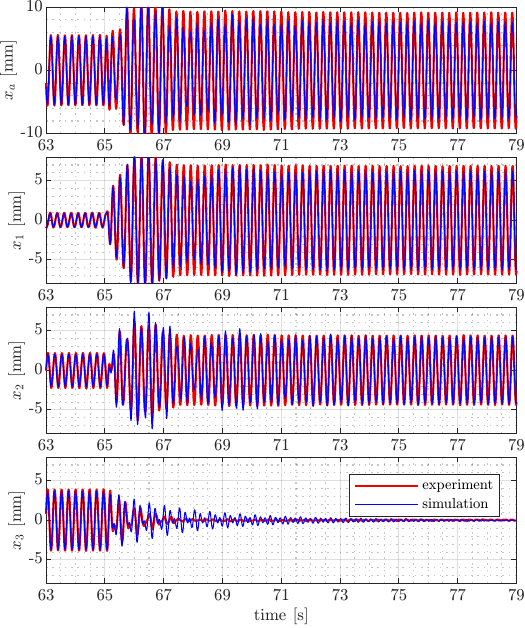} % double column
    \caption{Experimental results of non-collocated vibration absorption targeting $m_3$ -- detail of \cref{fig:exp_disp_all} in comparison with simulations.}
    \label{fig:exp_disp_m3}
\end{figure}

% I was personally surprised by the exceptional results of the non-collocated case where the body $m_3$ is stopped. First, because it works so well despite the long chain of bodies used to stop the body $m_3$ and despite the uncertainties in the parameters, especially in the damping. Second, it worked better than in the simulation. This might be caused by dry friction, which starts to play a significant role when the amplitude of the vibration becomes small, or by unmodeled nonlienarities.

\subsection{A note on higher frequency excitation}
To complete the analysis, we provide a short note on targeting excitation in the higher frequency range. From the admissible region \eqref{eq:frequency-high}, we select the excitation by $\omega=8.3 \: \mathrm{Hz}$. Applying \eqref{eq:control_design:g_2_tau_2}, with $k=0$ in all the cases, we obtain:
\begin{itemize}
    \item stopping the body $m_1$ (collocated), with $k=0$
    \begin{equation}\label{eq:set1h}
    \begin{aligned}
        g_1    &= -1011.59 \: \mathrm{N \, m^{-1}}, \tau_1 &= 0.0018 \: \mathrm{s},
    \end{aligned}
\end{equation}
\item stopping the body $m_2$ (non-collocated), with $k=0$
\begin{equation}\label{eq:set2h}
    \begin{aligned}
        g_2    &= -688.13 \: \mathrm{kg \, s^{-2}},
        \tau_2 &= 0.0073 \: \mathrm{s},
    \end{aligned}
\end{equation}
\item stopping the body $m_3$ (non-collocated), with $k=0$
\begin{equation}\label{eq:set3h}
    \begin{aligned}
        g_3    &=-956.08 \: \mathrm{kg \, s^{-2}},
        \tau_3 &= 0.0040 \: \mathrm{s},
    \end{aligned}
\end{equation}
\end{itemize}

%$f = 8.30 \: \mathrm{Hz}$:\\
% Stop m0 -- branch 1
%$g_1    = -1011.5922 \: \mathrm{N \, m^{-1}}$, 
%$\tau_1 = 0.0018 \: \mathrm{s}$,\\
% Stop m1 -- branch 0
%$g_2    = -688.1256 \: \mathrm{N \, m^{-1}}$, 
%$\tau_2 = 0.0073 \: \mathrm{s}$,\\
% Stop m2 -- branch 0
%$g_3    = -956.0772 \: \mathrm{N \, m^{-1}}$, 
%$\tau_3 = 0.0040 \: \mathrm{s}$
\begin{figure}[t]
    \centering
    \includegraphics[scale=1]{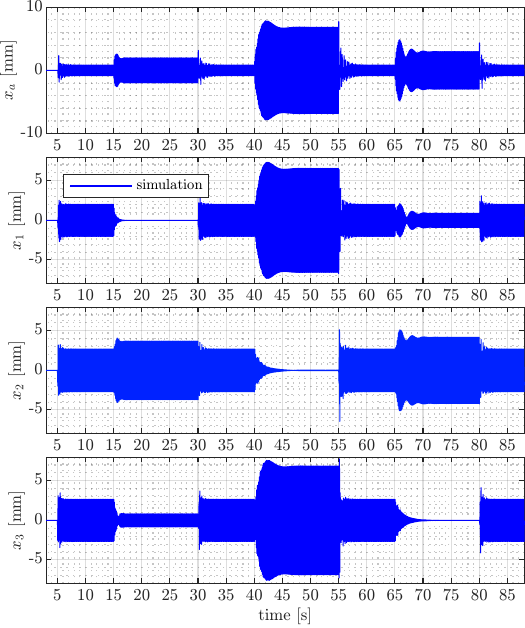} % double column
    \caption{Simulation results of absorbing vibration excited by the disturbance force $f$ given by \eqref{eq:general:excitation}, with $F=\SI{3}{N}$ and $\omega=8.30 \: \mathrm{Hz}$, applying the DR position feedback \eqref{eq:general:cDR_position} with parameters: i) \eqref{eq:set1h} active at $t\in[15, 30]\mathrm{s}$ to silence $x_1$ (collocated), ii) \eqref{eq:set2h} active at $t\in[40, 55]\mathrm{s}$ to silence $x_2$ (non-collocated), and iii) \eqref{eq:set3h} active at $t\in[65, 80]\mathrm{s}$ to silence $x_3$ (non-collocated).}
    \label{fig:case_study:sim_disp_all_8.30_Hz}
\end{figure}
The correctness of the synthesis is confirmed by both frequency domain analysis shown in bottom subfigure of \cref{fig:case_study:spectral_sensitivity}, where the amplitude is forced to zero in point-wise manner, and in \cref{fig:case_study:sim_disp_all_8.30_Hz}, where the simulation results are shown for the same scenario as in \cref{fig:exp_disp_all}. Unfortunately, the experimental validation cannot be performed on the current setup due to both actuation and hardware limitations. Notice that the gains in \eqref{eq:set1h}, \eqref{eq:set2h} and \eqref{eq:set3h} are considerably higher than in \eqref{eq:set1}, \eqref{eq:set2} and \eqref{eq:set3}, which naturally brings higher sensitivities to system-model mismatch and enhanced role of system non-linearities. Though, the main limitation stems in that the obtained values of the delays are too close to the sampling period $\Delta t=0.001 \: \mathrm{s}$. Remedy for this is to move to system with higher sampling speeds.

%It can also be seen that part of the original structure is used to perform the non-collocated vibration suppression, which may introduce a risk of structural fatigue. In future research, structural analysis and optimization will be performed to minimize the risk of fatigue.

\section{Concluding remarks}
\label{sec:conclusion}
Non-collocated vibration absorption using delayed resonator (with position feedback) as the tuning procedure is analyzed and experimentally validated. An easily scalable analytical pathway is presented to handle systems with higher degrees of freedom. The novelty comes in the formation of the absorber tuning feedback law. It can be obtained without the need to form the transfer function between the excitation force and target mass position. Detailed numerical and experimental validation is performed on a setup with three masses, for both collocated and non-collocated deployment. The spectral analysis revealed that even for the adopted setup with three masses only, it was difficult to find an excitation frequency for which the three masses can be sequentially silenced due to stability constraints. Although, with a proper system-model match and carefully tuned hardware, almost ideal vibration absorption was achieved not only for the collocated, but also for the non-collocated cases. To our best knowledge, it is for the first time such spatial tunability in vibration absorption is confirmed experimentally. Further research directions include synthesis of more complex control schemes to extend the applicable range of frequencies for which the system is stable, and enhancement of robustness against uncertainties in both system parameters and excitation frequencies. Additionally, hardware with higher sampling rate will be necessary to experimentally validate the intended results at higher frequency ranges of excitation.

% \clearpage % To be deleted in a final version
\bibliographystyle{elsarticle-num} 
\bibliography{NCss_Olgac,references}

\end{document}